\journalname{Designs, Codes and Cryptography}
\begin{document}

\title{On the Automorphism Groups of the $Z_2 Z_4$-Linear $1$-Perfect and Preparata-Like Codes%
\thanks{The results were presented in part at the 3th International Castle Meeting on Coding Theory and Applications in 2011.
}
}

\titlerunning{On the Automorphism Groups of $Z_2 Z_4$-Linear Codes}        

\author{Denis S. Krotov
}


\institute{D. Krotov \at
              Sobolev Institute of Mathematics, pr. Akademika Koptyuga 4, Novosibirsk 630090, Russia \\
              \email{krotov@math.nsc.ru}     
}

\date{Received: 2015-07-07 / Accepted: ????-??-??}

\maketitle

\begin{abstract}
We consider the symmetry group of a $Z_2Z_4$-linear code with parameters of a $1$-perfect, 
extended $1$-perfect, or Pre\-pa\-ra\-ta-like code.
We show that, provided the code length is greater than $16$,
this group consists only of symmetries that preserve the $Z_2Z_4$ structure.
We find the orders of the symmetry groups of the $Z_2Z_4$-linear
(extended) $1$-perfect codes.
\keywords{additive codes 
\and $Z_2Z_4$-linear codes 
\and $1$-perfect codes 
\and Pre\-pa\-ra\-ta-like codes 
\and automorphism group 
\and symmetry group}
\subclass{MSC 94B25}
\end{abstract}
\section{Introduction}

Spaces considered in coding theory usually
have both metrical and algebraic structures.
From the point of view of the parameters of an error-correcting code,
the metrical one is the most important,
while the algebraic properties give an advantage in
constructing codes,
in developing coding and decoding algorithms,
or in different applications.
In some cases, there are some ``rigid'' connections between
metrical and algebraic structures. For example,
if the $q$-ary Hamming metric space with $q=p^m=2^1,3^1,2^2$
is considered as a vector space over
the field $\mathrm{GF}(p)$, then any isometry of the space is
necessarily an affine transformation.
This is not the case for any prime power $q\geq 5$.
However, the stabilizer of some codes
in the group of space isometries consists of affine transformations only.
So, informally, from the point of view of such codes,
the algebraic structure is rigidly connected
with the metrical one.
For example, this was proved for the Hamming codes for an arbitrary $q$ \cite{Gor:2010}.
In the current paper, we prove a similar result 
for the $Z_2Z_4$-linear perfect,
extended perfect, and Pre\-pa\-ra\-ta-like
codes with respect to a $Z_2Z_4$ algebraic structure, which is,
after the field structure, one of most important in coding theory.
In contrast, the situation with the $Z_2Z_4$-linear 
Hadamard codes is different \cite{KroVil:2015}: the automorphism group of such a code 
is larger than the group of automorphisms preserving the $Z_2Z_4$-linear structure.

In Sections~\ref{s:z2z4}--\ref{s:prep}, we give basic definitions and facts 
about the concepts discussed in the paper.
The main result of the paper and important corollaries are formulated in Section~\ref{s:res}. 
Section~\ref{s:proof} contains a proof of the main result,
which states that a code from the considered classes can admit only one $Z_2Z_4$-linear structure.
As a direction for further research, it would be interesting to generalize this result to a more wide class of $Z_2Z_4$-linear codes. The study of automorphism groups is motivated by their role in decoding algorithms, see e.g. \cite{BBFV:2015}.
\section{$Z_2Z_4$-Linear Codes}\label{s:z2z4}
A binary code $C\subset \{0,1\}^n$ is called \em $Z_2Z_4$-linear \em
if for some order-$2$ permutation (involution) $\pi$ of the set $\{0,1,\ldots ,n-1\}$ of coordinates,
$C$ is closed with respect to the operation $*_\pi$, where
$$x *_\pi y \stackrel{\scriptscriptstyle\mathrm{def}}= x+y + (x+\pi(x))\cdot (y+\pi(y)),$$
$+$ and $\cdot$ 
being the coordinatewise 
modulo-$2$ addition and multiplication respectively
(here and elsewhere, the action of a permutation
$\sigma: \{0,1,\ldots ,n-1\}\to \{0,1,\ldots ,n-1\}$ on $x=(x_0,x_1,\ldots,x_{n-1})\in\{0,1\}^n$ is defined as
$\sigma(x)\stackrel{\scriptscriptstyle\mathrm{def}}=(x_{\sigma^{-1}(0)},x_{\sigma^{-1}(1)},\ldots,x_{\sigma^{-1}(n-1)})$\,).
Given an involution $\pi$,
we will say that coordinates $i$ and $j$ are \emph{adjacent} if $\pi(i)=j$;
if $\pi(i)=i$, then $i$ is \emph{self-adjacent}.
Clearly, the value of the result $z=x *_\pi y$ in some coordinate $i$ depends only
on the values of $x$ and $y$ in the $i$th and $\pi(i)$th coordinates.
So, considering the result of $*_{\pi}$ in two different adjacent coordinates,
 we can determine the values by Table~\ref{eq:tables}(a), while the case $\pi(i) = i$
 corresponds to Table~\ref{eq:tables}(b).
\begin{table}[!b]
\caption{Values of $*_\pi$ for two different adjacent coordinates and for a self-adjacent coordinate}
\label{eq:tables}
$$
\mbox{(a) }
\begin{array}{c|cccc}
&00&01&11&10 \\
\hline
00&00&01&11&10 \\
01&01&\mathbf{11}&\mathbf{10}&00 \\
11&11&\mathbf{10}&\mathbf{00}&01 \\
10&10&00&01&11
\end{array}
\qquad
\mbox{(b) }
\begin{array}{c|cc}
&0&1 \\
\hline
0&0&1 \\
1&1&0
\end{array}
$$
\end{table}

Tables~\ref{eq:tables}(a) and~\ref{eq:tables}(b) are the value tables of groups isomorphic to $Z_4$ and $Z_2$, respectively.
Consequently, $(\{0,1\}^n,*_{\pi})$ is isomorphic to the group $Z_2^\alpha Z_4^\beta$, where
$\alpha$ is the number of self-adjacent coordinates and $\beta=(n-\alpha)/2$.
In this case we will say that $\pi$ is a \em $Z_2Z_4$ structure \em of type $(\alpha,\beta)$.
The binary codes closed with respect to $*_{\pi}$ are known as \em $Z_2Z_4$-linear codes \em of type $(\alpha,\beta)$.
The $Z_2Z_4$-linear codes with $\alpha=n$ are called \em linear\em; with $\alpha=0$,
\em $Z_4$-linear\em.

\section{$Z_2Z_4$-Additive Codes, Gray Map, Duality}\label{s:add}
In this section, we consider an alternative way to define $Z_2Z_4$-linear codes
and related concepts.
In the literature, this way is more popular than the definition given 
in Section~\ref{s:z2z4}; however, for presenting results of the current paper
the last one is more convenient. 
The content of the section is not used in the formulation 
of the main result of the paper 
(Theorem~\ref{th:main}) and its proof,
but the concepts defined here are exploited in the proof of Corollary~\ref{c:abc}
(to derive the order of a $Z_2Z_4$-linear 
(extended) $1$-perfect code from Theorem~\ref{th:main} and known facts) and in the formulation of Corollary~\ref{c:rev}.

A code 
$\mathcal C \subseteq \{0,1\}^\alpha \times \{0,1,2,3\}^\beta$
in the mixed $Z_2$--$Z_4$ alphabet is called 
\emph{additive} (\emph{$Z_2Z_4$-additive})
if it is closed with respect to the coordinatewise addition, modulo $2$ in the first $\alpha$ coordinates and modulo $4$ in the last $\beta$ coordinates.
The one to one correspondence $\Phi:\{0,1\}^\alpha \times \{0,1,2,3\}^{\beta} \to \{0,1\}^{\alpha+2\beta}$ known
as the \emph{Gray map} is defined as follows:
$$ \Phi((x_0,\ldots,x_{\alpha-1},y_0,\ldots,y_{\beta-1})) = (x_0,\ldots,x_{\alpha-1},\phi(y_0),\ldots,\phi(y_{\beta-1})),$$
where 
$\phi(0)=(0,0)$, 
$\phi(1)=(0,1)$, 
$\phi(2)=(1,1)$, 
$\phi(3)=(1,0)$.
The following straightforward fact 
means that
a $Z_2Z_4$-linear code
can be defined 
as the image of a $Z_2Z_4$-additive code
under the Gray map 
and a coordinate permutation.
\begin{proposition}\label{p:gray}
A code $\mathcal C \subseteq \{0,1\}^\alpha \times \{0,1,2,3\}^\beta$ is additive 
if and only if its image 
$\Phi(C)$ under the Gray map
is closed under the operation 
$*_\pi$, where 
\begin{equation}\label{eq:standard-pi}
\pi=(\alpha\ \alpha{+}1)
(\alpha{+}2\ \alpha{+}3)\ldots (\alpha{+}2\beta{-}2\ \alpha{+}2\beta{-}1).
\end{equation}
\end{proposition}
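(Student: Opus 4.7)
The plan is to verify the equivalence coordinate-by-coordinate, exploiting the observation made just before Table~\ref{eq:tables} that the result of $x*_\pi y$ in coordinate $i$ depends only on the entries of $x$ and $y$ in positions $i$ and $\pi(i)$. Hence $*_\pi$ respects the orbit decomposition of $\pi$, and with the standard $\pi$ of \eqref{eq:standard-pi} this decomposition is: $\alpha$ fixed points (coordinates $0,\ldots,\alpha{-}1$) and $\beta$ two-cycles $(\alpha{+}2k,\,\alpha{+}2k{+}1)$ for $k=0,\ldots,\beta{-}1$. So it suffices to show that, under the corresponding factor of $\Phi$, the restriction of $*_\pi$ to each orbit realizes the appropriate component ($Z_2$ or $Z_4$) of the coordinatewise addition on $\{0,1\}^\alpha \times \{0,1,2,3\}^\beta$.

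First I would handle the $\alpha$ self-adjacent coordinates. Here $\Phi$ acts as the identity, and by definition $x*_\pi y = x+y$ whenever $\pi(i)=i$, matching Table~\ref{eq:tables}(b); this is precisely addition modulo~$2$, which is the first factor of the additive operation. Next I would handle each adjacent pair. On a single pair, I need to check that
$$\phi(a+b\bmod 4) = \phi(a)*_\pi \phi(b) \qquad \text{for all } a,b\in\{0,1,2,3\},$$
where $*_\pi$ on this pair is given by Table~\ref{eq:tables}(a). This is a finite, $16$-entry verification using $\phi(0)=(0,0)$, $\phi(1)=(0,1)$, $\phi(2)=(1,1)$, $\phi(3)=(1,0)$; it simply confirms that, when the rows and columns of Table~\ref{eq:tables}(a) are relabeled by $\phi^{-1}$, one recovers the Cayley table of $Z_4$, a fact already stated in the paragraph following the tables.

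Assembling the pieces, $\Phi$ is a group isomorphism from $(\{0,1\}^\alpha\times\{0,1,2,3\}^\beta,+)$ to $(\{0,1\}^{\alpha+2\beta},*_\pi)$ for the specific $\pi$ of \eqref{eq:standard-pi}. Consequently $\mathcal C$ is closed under coordinatewise mixed addition if and only if $\Phi(\mathcal C)$ is closed under $*_\pi$, which is the claim.

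There is no real obstacle here: the only computational content is the $16$-entry check on an adjacent pair, and that check is essentially the same statement as the isomorphism $(\{0,1\}^2,*_\pi)\cong Z_4$ announced right after Table~\ref{eq:tables}. The proposition is therefore a direct translation between the additive and operational descriptions, as the ``straightforward'' qualifier in the statement suggests.
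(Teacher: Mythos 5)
Your proof is correct and follows exactly the route the paper intends for this ``straightforward fact'': the paper omits a formal proof, but its remark that Tables~\ref{eq:tables}(a) and~\ref{eq:tables}(b) are the Cayley tables of $Z_4$ and $Z_2$ (with the $Z_4$ table already indexed by $\phi(0),\phi(1),\phi(2),\phi(3)$) is precisely your observation that $\Phi$ is a group isomorphism onto $(\{0,1\}^{\alpha+2\beta},*_\pi)$ for the $\pi$ of (\ref{eq:standard-pi}). Nothing is missing; the orbit decomposition plus the $16$-entry check is the whole content.
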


The \emph{inner product}
$[x,y]$
of two words 
$x=(x_0,\ldots,x_{\alpha-1},x'_0,\ldots,x'_{\beta-1})$ and 
$y=(y_0,\linebreak[1]
\ldots,\linebreak[1]
y_{\alpha-1},\linebreak[1]
y'_0,\linebreak[1]
\ldots,\linebreak[1]
y'_{\beta-1})$ 
from $\{0,1\}^\alpha \times \{0,1,2,3\}^\beta$
is defined as 
\begin{equation}\label{eq:inner}
[x,y]
\stackrel{\scriptscriptstyle\mathrm{def}}=
2x_0y_0+\ldots+2x_{\alpha-1}y_{\alpha-1}+x'_0y'_0+\ldots+x'_{\beta-1}y'_{\beta-1} \bmod 4.
\end{equation}
For a $Z_2Z_4$-additive code
$\mathcal C\subseteq \{0,1\}^\alpha \times \{0,1,2,3\}^\beta$, 
its \emph{dual} $\mathcal C^\perp$
is defined as 
\begin{equation}\label{eq:dual}
\mathcal C^\perp
\stackrel{\scriptscriptstyle\mathrm{def}}=
\{ x\in \{0,1\}^\alpha \times \{0,1,2,3\}^\beta \mid [x,y]=0\mbox{ for all }y\in \mathcal C\}.
\end{equation}
Readily, $\mathcal C^\perp$ is also an additive code.
Moreover, $(\mathcal C^\perp)^\perp=\mathcal C$, see, e.g., \cite{BFPRV2010}.
\section{Symmetry Group}\label{s:sym}
Let $S_n$ be the set of permutations of $\{0,1,\ldots ,n-1\}$.
The \em symmetry group \em of a code $C\subset \{0,1\}^n$ is defined as
$$ \mathrm{Sym}(C) \stackrel{\scriptscriptstyle\mathrm{def}}= \{ \sigma \in S_n \mid \sigma(x)\in C \mbox{ for all } x\in C \}. $$
Given a $Z_2Z_4$-structure $\pi$, we will also consider the \em group of $Z_2Z_4$-symmetries \em $\mathrm{Sym}_{\pi}(C)$ as a subgroup
of $\mathrm{Sym}(C)$ consisting of symmetries $\sigma$ that
commute with the involution $\pi$:
$$\mathrm{Sym}_{\pi}(C) \stackrel{\scriptscriptstyle\mathrm{def}}= \left\{ \sigma \in \mathrm{Sym}(C) \mid 
\sigma(\pi(i))=\pi(\sigma(i)) \mbox{ for all } i\in \{0,1,\ldots ,n-1\} \right\}. $$
 In other words, $\mathrm{Sym}_{\pi}(C)$ is the intersection of $\mathrm{Sym}(C)$
with the automorphism group of the group $(\{0,1\}^n, *_{\pi})$ (which is, by definition, the set of
all permutations $\sigma$ of $\{0,1\}^n$ such that $\sigma(x) *_{\pi} \sigma(y) = \sigma(x *_{\pi} y) $ for every $x$, $y$).

The group $\mathrm{Sym}_{\pi}(C)$ 
has a natural treatment 
in terms of the preimage of $C$
under the Gray map.
Indeed, if 
$C = \Phi(\mathcal{C})$ 
for some  
$\mathcal{C}\subseteq \{0,1\}^\alpha \times \{0,1,2,3\}^\beta$ 
and $\pi$ is of form 
(\ref{eq:standard-pi}), 
then 
\begin{equation}\label{eq:sym-aum}
 \mathrm{Sym}_{\pi}(C) = 
 \{\Phi \sigma \Phi^{-1} \mid
 \sigma \in \mathrm{MAut}(\mathcal C)\},
\end{equation}
where $\mathrm{MAut}(\mathcal C)$, 
the \emph{monomial automorphism group} 
of $\mathcal C$,
is the stabilizer  
of $\mathcal C$ 
in the group 
of monomial transformations
of $\{0,1\}^\alpha \times \{0,1,2,3\}^\beta$
(recall that a \emph{monomial transformation}
consists of a coordinate permutation followed by sign changes in some quaternary coordinates).

To prove one of the corollaries from the main theorem, 
we will need the following simple known fact.
\begin{proposition}\label{p:ap-pa}
For every $Z_2Z_4$-additive code $C$, it holds 
$\mathrm{MAut}(\mathcal C) = \mathrm{MAut}(\mathcal C^\perp).$
\end{proposition}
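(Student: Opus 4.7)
The plan is to reduce the claim to the fact that monomial transformations preserve the inner product $[\cdot,\cdot]$, and then invoke biduality.

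First I would verify the key invariance: for every monomial transformation $\sigma$ of $\{0,1\}^\alpha \times \{0,1,2,3\}^\beta$ and every pair of words $x,y$, one has $[\sigma(x),\sigma(y)]=[x,y]$. A coordinate permutation only rearranges the summands in (\ref{eq:inner}), so preserves the sum modulo $4$. A sign change in a quaternary coordinate $i$ replaces $x'_i$ by $-x'_i$ and $y'_i$ by $-y'_i$ simultaneously in the two arguments, and $(-x'_i)(-y'_i)=x'_iy'_i \bmod 4$; the binary coordinates are untouched. A general monomial transformation is a composition of these two kinds, so invariance holds.

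Next I would show $\mathrm{MAut}(\mathcal{C}) \subseteq \mathrm{MAut}(\mathcal{C}^\perp)$. Let $\sigma \in \mathrm{MAut}(\mathcal{C})$ and $x \in \mathcal{C}^\perp$. For any $y \in \mathcal{C}$, the word $\sigma^{-1}(y)$ also lies in $\mathcal{C}$ (since $\sigma^{-1}$ is monomial and stabilizes $\mathcal{C}$); hence by the invariance above,
\[
[\sigma(x),y] \;=\; [\sigma(x),\sigma(\sigma^{-1}(y))] \;=\; [x,\sigma^{-1}(y)] \;=\; 0.
\]
Thus $\sigma(x)\in \mathcal{C}^\perp$, and by definition $\sigma \in \mathrm{MAut}(\mathcal{C}^\perp)$.

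Finally, the reverse inclusion is immediate by applying what was just proved to the code $\mathcal{C}^\perp$ in place of $\mathcal{C}$, together with $(\mathcal{C}^\perp)^\perp = \mathcal{C}$ cited in the excerpt. The only real step of substance is the inner-product invariance; everything else is formal. No step looks delicate, so I do not expect a serious obstacle — the only thing to be careful about is that the sign changes must act on the quaternary coordinates of both arguments simultaneously, which is exactly what happens when a single $\sigma$ is applied to both sides of $[\sigma(x),\sigma(y)]$.
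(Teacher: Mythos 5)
Your proof is correct and follows essentially the same route as the paper: the paper's key identity $[\sigma^{-1}(x),y]=[x,\sigma(y)]$ is just your invariance $[\sigma(x),\sigma(y)]=[x,y]$ restated, and both arguments then get one inclusion and finish with $(\mathcal C^\perp)^\perp=\mathcal C$. You merely verify the invariance in more detail (permutation plus sign changes) than the paper, which treats it as immediate from the definition of the inner product.
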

\begin{proof}
At first, we see from (\ref{eq:inner}) that
$[\sigma^{-1}(x),y] = [x,\sigma(y)]$ 
for every monomial transformation $\sigma$.
This identity 
can be utilised 
to derive 
$\sigma(\mathcal C^\perp) = (\sigma(\mathcal C))^\perp$ from (\ref{eq:dual}):
\begin{eqnarray*}
\sigma(\mathcal C^\perp)
&=& \{ \sigma(x) \mid [x,y]=0\ \forall y\in \mathcal C\}
= \{ x' \mid [\sigma^{-1}(x'),y]=0 \ \forall y\in \mathcal C\}\\
&=&\{ x' \mid [x',\sigma(y)]=0 \ \forall y\in \mathcal C\}
=\{ x' \mid [x',y']=0 \ \forall y'\in \sigma(\mathcal C)\} = (\sigma(\mathcal C))^\perp.
\end{eqnarray*}
If $\sigma\in \mathrm{MAut}(\mathcal C)$,
then $\sigma(\mathcal C)=\mathcal C$
and hence $\sigma(\mathcal C^\perp) = \mathcal C^\perp$.
We conclude that
$\mathrm{MAut}(\mathcal C) \subseteq \mathrm{MAut}(\mathcal C^\perp)$.
From $(\mathcal C^\perp)^\perp=\mathcal C$, we get the inverse inclusion.
\end{proof}

It is worth to mention the \em full automorphism group \em $\mathrm{Aut}(C)$ of a binary code $C$, which is the stabilizer of the code in the group of isometries of the Hamming space.
We only observe that for a $Z_2Z_4$-linear code $C$, 
this group is the product of 
$\mathrm{Sym}(C)$ 
and the group of translations $\{\mathrm{tr}_c \mid c\in C\}$, 
where $\mathrm{tr}_c(x) \stackrel{\scriptscriptstyle\mathrm{def}}= c *_{\pi} x$. 
As follows, $|\mathrm{Aut}(C)|=|C|\cdot |\mathrm{Sym}(C)|$.
\section{Perfect and Extended Perfect Codes}\label{s:perf}

A binary code $C\subset \{0,1\}^n$ is called \em $1$-perfect \em(\em extended $1$-perfect\em )
if its cardinality
is $2^n/(n+1)$ (respectively, $2^{n-1}/n$) and the distance between
every two distinct codewords is at least $3$ (respectively, $4$),
where the (\emph{Hamming}) \emph{distance} is defined as the number of positions in which the words differ.
Note that the denominator $(n+1)$ (respectively, $n$)
coincides with the cardinality of a radius-$1$ ball (respectively, sphere) and must be a power of $2$
for the existence of corresponding codes.
So, a characterizing property of a $1$-perfect code is that every binary word
is at distance at most $1$ from exactly one codeword.
The codewords of an extended $1$-perfect code have the same parity
(i.e., the parity of the \emph{weight}, the number of ones in the word),
and every word of the other parity is at distance $1$ from exactly one codeword.

There is a characterization of $Z_2Z_4$-linear $1$-perfect and $Z_2Z_4$-linear extended
$1$-perfect binary codes, see  \cite{BorRif:1999} ($Z_2Z_4$-linear $1$-perfect codes), \cite{Kro:2000:Z4_Perf,Kro:arXiv:Z4_Perf} 
($Z_4$-linear extended $1$-perfect codes), and \cite{BorPheRif:2003} (complete description).
Recall that the \emph{rank} of a binary code is the dimension of its linear closure over $Z_2$.

\begin{proposition}[\cite{BorRif:1999,Kro:2000:Z4_Perf,BorPheRif:2003}]\label{p:char}
{\rm (a)} For any $r$ and $t \geq 4$ such that $ t/2 \leq r \leq t $,
there is exactly one $Z_2Z_4$-linear $1$-perfect code (extended $1$-perfect code) of type
$(2^r-1, 2^{t-1}-2^{r-1})$ (respectively, $(2^r, 2^{t-1}-2^{r-1})$), up to coordinate permutation.

{\rm (b)} For any $t \geq 4$,
there are exactly $\lfloor (t+1)/2 \rfloor$ $Z_2Z_4$-linear extended $1$-perfect codes of type
$(0, 2^{t-1})$ (i.e., $Z_4$-linear), up to coordinate permutation;
all these codes have different ranks $2^t - r - 1$, $r=\lfloor t/2 \rfloor,\ldots , t-1$,
except for the case $t=4$, $r=3$, where the corresponding code is linear.

{\rm (c)} All codes from (a) and (b) are pairwise nonequivalent. There are no other $Z_2Z_4$-linear $1$-perfect codes or $Z_2Z_4$-linear extended $1$-perfect codes.
\end{proposition}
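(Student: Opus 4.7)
The plan is to pass to the $Z_2Z_4$-additive preimage $\mathcal{C}$ of $C$ under the Gray map (Proposition~\ref{p:gray}) and classify the admissible additive duals $\mathcal{C}^\perp$. For a $1$-perfect $C = \Phi(\mathcal{C})$, the cardinality $|C| = 2^n/(n+1)$ forces $n+1 = 2^t$ and $|\mathcal{C}^\perp| = 2^t$. The abelian-group structure of $\mathcal{C}^\perp$ is thus $Z_2^a \times Z_4^b$ for some nonnegative integers with $a + 2b = t$, and $\mathcal{C}^\perp$ is described by a ``parity-check'' list of $\alpha + \beta$ columns in this group: binary ones coming from the $Z_2$-coordinates of $\mathcal{C}$ and quaternary ones from its $Z_4$-coordinates.

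The heart of the argument is to translate the minimum-Hamming-distance condition on $\Phi(\mathcal{C})$ into a Hamming-code-like condition on this column list. Weight-$2$ codewords in the Gray image correspond to pairs of columns whose inner-product combination vanishes modulo $4$, and demanding their absence yields the following: the binary columns are distinct nonzero elements of the order-$\le 2$ subgroup of $\mathcal{C}^\perp$, while the quaternary columns form a set of representatives (one per pair $\{v,-v\}$) of the order-$4$ elements. Counting these gives $\alpha = 2^{a+b}-1$ and $\beta = (2^{a+2b}-2^{a+b})/2$, matching the type in part~(a) with $r = a+b$, and the range $t/2\le r\le t$. For each such $(a,b)$, the column list is determined up to the natural action of $\mathrm{Aut}(Z_2^a\times Z_4^b)$, and any two admissible codes of the same type are related by a coordinate permutation of $\mathcal{C}$ together with sign changes on quaternary coordinates (which do not affect the binary image $\Phi(\mathcal{C})$). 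The extended case in part~(a) is handled by adjoining a self-adjacent overall-parity coordinate and checking that $Z_2Z_4$-linearity is preserved.

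For part~(b), with $\alpha = 0$, the parameter $r = a + b$ now ranges over $\lfloor t/2\rfloor\le r\le t-1$, yielding $\lfloor (t+1)/2\rfloor$ codes, each pairwise non-equivalent because the $Z_2$-rank of the binary image is a coordinate-permutation invariant. The rank value $2^t - r - 1$ can be extracted from the standard formula for the $Z_2$-span of a $Z_4$-linear code in terms of the generators of $\mathcal{C}$ and $2\mathcal{C}$; the case $t=4$, $r=3$ collapses to the binary extended Hamming code, which accounts for the exception. Part~(c) is then immediate: codes with different types $(\alpha,\beta)$ are inequivalent for parameter reasons, and within the $Z_4$-linear family the rank distinguishes them.

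I expect the principal obstacle to be the uniqueness step in part~(a): showing that any two parity-check multi-sets in $Z_2^a\times Z_4^b$ satisfying the Hamming-like conditions are related by an automorphism of the group realizable through coordinate operations on $\mathcal{C}$. A secondary subtlety is the rank computation in part~(b), which requires a careful accounting of how the Gray map interacts with the $Z_4$-module structure of $\mathcal{C}$, in particular how the $Z_2$-span of $\Phi(\mathcal{C})$ decomposes into contributions from $\mathcal{C}$ itself and from the quadratic correction term in the formula defining $*_\pi$.
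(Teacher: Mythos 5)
The paper does not prove Proposition~\ref{p:char}: it is imported from \cite{BorRif:1999,Kro:2000:Z4_Perf,BorPheRif:2003} as a known classification, so there is no internal proof to compare against. Your outline does follow the strategy of those references --- pass to the additive preimage, identify $\mathcal C^\perp\cong Z_2^a\times Z_4^b$ with $a+2b=t$, and read the perfect-code condition as a condition on the syndrome columns --- and for the non-extended half of part~(a) it is essentially sound: the ball-partition property forces the binary columns to exhaust the nonzero order-$2$ elements and the quaternary columns to pick one representative from each pair $\{v,-v\}$ of order-$4$ elements, which yields the type with $r=a+b$ and uniqueness up to monomial equivalence (negation of a quaternary coordinate being a transposition in the Gray image, so binary coordinate-permutation equivalence follows).

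The genuine gaps are in (b) and (c). For (b), the $Z_4$-linear extended codes are \emph{not} extensions of $Z_2Z_4$-linear $1$-perfect codes (those always have $\alpha=2^r-1\ge 1$), so the column analysis must be redone for the extended-perfect condition: here $|\mathcal C^\perp|=2^{t+1}$, the relevant partition is of the odd-parity syndrome classes rather than of the whole group, and the admissible column configurations are different; the claim that ``$r=a+b$ now ranges over $\lfloor t/2\rfloor\le r\le t-1$'' is asserted rather than derived, and the rank formula $2^t-r-1$ is only gestured at (it is a theorem of \cite{BorPheRif:2003}, not a routine byproduct of your setup). Most seriously, in (c) you dismiss the inequivalence of codes of different types ``for parameter reasons,'' but the type is not a priori a permutation invariant of the underlying binary code: a single binary code can admit several $Z_2Z_4$ structures, and ruling this out for exactly these codes is the content of the paper's Theorem~\ref{th:main}, which cannot be invoked here without circularity. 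The pairwise nonequivalence in (c) --- including that of same-length codes with different $r$ in (a) --- is established in the cited papers by computing the rank and kernel of the binary images and verifying that these invariants separate the codes; that computation is the missing substance of your argument, and it is also what your part~(b) rank claim secretly depends on.
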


\section{Preparata-Like Codes}\label{s:prep}
A binary code $C\subset \{0,1\}^n$ is called \em Pre\-pa\-ra\-ta-like \em
if its cardinality
is $2^n/n^2$ and the distance between
every two distinct codewords is at least $6$. 
Such codes exist if and only if $n$ is a power of $4$  
\cite{Preparata:1968}.
The original Preparata code \cite{Preparata:1968}, and the generalizations 
\cite{Dumer:1976}, \cite{BvLW}, \cite{vDamFDFla:2000} are not $Z_4$-linear if $n>16$.
A class of $Z_4$-linear Pre\-pa\-ra\-ta-like codes was constructed in \cite{HammonsOth:Z4_linearity}
for every $n=2^{t+1}\geq 16$, $t$ odd;
codes nonequivalent to that from \cite{HammonsOth:Z4_linearity}
were found in \cite{CCKS:97}.
As was shown in \cite[Theorem~5.11]{Kantor:2004symplecticsemifield}, 
there are many nonequivalent $Z_4$-linear Pre\-pa\-ra\-ta-like codes of the same length
(their number grows faster than any polynomial in $n$; 
however, there are some restrictions on $n=2^{t+1}$: 
$t$ is not a prime nor the product of two primes).
We will use the following two facts, which make 
our results concerning Pre\-pa\-ra\-ta-like codes simple corollaries 
from the results on extended $1$-perfect codes.

\begin{proposition}[\cite{ZZS:73:Praparata}]\label{p:P-C}
For every Pre\-pa\-ra\-ta-like code $P$, there exists a unique extended $1$-perfect code $C$ including $P$.
\end{proposition}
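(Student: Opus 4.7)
After translating so that $0\in P$, one first checks that all codewords of $P$ have even weight (else no even-parity code could contain $P$, and an extended $1$-perfect code has constant weight parity). An extended $1$-perfect extension $C\supseteq P$ then has $|C|=(n/2)|P|=2^{n-1}/n$, consists of even-weight words, and partitions the $2^{n-1}$ odd-weight words by radius-$1$ balls around codewords. Since the odd-weight words at distance $1$ from $P$ number exactly $n|P|=2^n/n$, the ``deep'' set $D=\{y:w(y)\text{ odd},\,d(y,P)\geq 3\}$ has $(n-2)2^{n-1}/n$ elements and must be partitioned into balls of size $n$ around $(n-2)2^{n-1}/n^2 = |C|-|P|$ even-weight centers, each at distance $\geq 4$ from $P$. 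Both existence and uniqueness of $C$ thereby reduce to a combinatorial question: does $D$ admit one and only one partition of this form?

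For existence I would use the uniform-packing property of a Preparata-like code — the number of codewords of $P$ within distance $2$ of a word $y$ depends only on $d(y,P)$ — to identify, for each $y\in D$, the unique even-weight ``center'' $c(y)$ at distance $1$ from $y$ and distance $\geq 4$ from $P$ that is consistent with a global partition. A sphere-covering count based on the exact parameters $|P|=2^n/n^2$ and $d(P)=6$ then verifies that these $c(y)$ partition $D$ and that the resulting $C=P\cup\{c(y):y\in D\}$ has minimum distance $4$. For uniqueness, suppose $C_1,C_2$ are two extensions and take $c\in C_1\setminus P$. Each of $c+e_0,\dots,c+e_{n-1}$ lies in $D$ and has a unique distance-$1$ neighbor in $C_2$. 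A short case analysis using $d(C_2)\geq 4$ shows that either all these neighbors coincide with $c$ (so $c\in C_2$), or the $n$ words $c+e_i$ are covered by $n/2$ codewords of $C_2\setminus P$ at distance $2$ from $c$ pairwise matched by a fixed-point-free involution on $\{0,\dots,n-1\}$. Ruling out this ``paired'' alternative gives $C_1\subseteq C_2$ and hence, by symmetry, $C_1=C_2$.

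The main obstacle is precisely this rigidity — singling out the correct center $c(y)$ in existence and excluding the paired alternative in uniqueness. Both rely on the specific parameters of Preparata-like codes (size $2^n/n^2$, minimum distance $6$, length a power of $4$) in an essential way: the implied weight distribution of a paired completion would contradict the known distance enumerator of extended $1$-perfect codes, while the covering structure of $P$ near any deep $y$ leaves room for only one distance-$4$ completion. Once the rigidity is in hand, the rest of the argument is a sequence of routine counts.
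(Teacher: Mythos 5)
The paper does not prove this proposition at all---it is imported from Zaitsev--Zinoviev--Semakov \cite{ZZS:73:Praparata}---so there is no in-paper argument to compare with; your proposal must stand on its own, and as written it does not. Your counting framework (the set $D$, $|D|=(n-2)2^{n-1}/n$, the required $|C|-|P|=(n-2)2^{n-1}/n^2$ centers) is correct and is the right skeleton, but the three places where the actual content lives are all left open. (a) The even-weight claim for $P$ is a genuine theorem, not a ``check'': it follows from the forced distance distribution of codes meeting the relevant packing bound (nearly perfect/uniformly packed codes), and nothing in your sketch supplies it; without it, existence cannot even get started. (b) Your existence step never defines the center $c(y)$: ``the unique center consistent with a global partition'' presupposes the partition whose existence is being proved, and the uniform-packing property as you state it (codewords of $P$ within distance $2$ of $y\in D$) is vacuous, since that number is $0$ by the definition of $D$. (c) Most seriously, the ``paired alternative'' in your uniqueness argument cannot be ruled out by the distance enumerator of extended $1$-perfect codes: for \emph{any} extended $1$-perfect code $C_2$ and \emph{any} even-weight word $c\notin C_2$, the word $c$ is at distance $2$ from exactly $n/2$ codewords of $C_2$, paired by a perfect matching on the coordinates. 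So your dichotomy is literally ``$c\in C_2$ or $c\notin C_2$,'' and the claimed contradiction is not there.

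The fix is a sharper count that you stop just short of. Normalize $0\in P$ and grant the even-weight fact. Since $d(P)=6$, the radius-$2$ balls around $P$ are disjoint, so the number of \emph{even}-weight words at distance $\le 2$ from $P$ is exactly $|P|\bigl(1+\binom n2\bigr)$, whence the number of even-weight words at distance $\ge 4$ from $P$ is $2^{n-1}-\frac{2^n}{n^2}-\frac{(n-1)2^{n-1}}{n}=\frac{2^{n-1}}{n}-\frac{2^n}{n^2}=|C|-|P|$. Every codeword of $C\setminus P$ is such a word, so $C\setminus P$ must equal the whole set $Q=\{z : w(z)\text{ even},\ d(z,P)\ge4\}$; uniqueness is immediate. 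For existence one shows $P\cup Q$ has minimum distance $4$, which reduces to showing every $y\in D$ has a neighbor in $Q$ (then $\sum_{q\in Q}|S_1(q)|=n|Q|=|D|$ forces the spheres to be disjoint): the neighbors of $y$ at distance exactly $2$ from $P$ number $3\cdot|\{p\in P: d(y,p)=3\}|$, which cannot equal $n$ because $n$ is a power of $4$ and hence not divisible by $3$. That divisibility observation, together with the count of $Q$, is the rigidity your plan identifies as the main obstacle but does not supply.
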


\begin{proposition}%
[\cite{BPRZ:2003:Preparata}]%
\label{p:ZP-ZC}
Assume that a Pre\-pa\-ra\-ta-like code $P$ 
is closed with respect 
to the operation $*_\pi$,
where $\pi$ is a $Z_2Z_4$ structure.
Then the extended $1$-perfect code $C$
including $P$ 
is also closed with respect 
to $*_\pi$.
\end{proposition}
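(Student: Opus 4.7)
The plan is to apply the uniqueness statement of Proposition \ref{p:P-C} twice, exploiting that every $*_\pi$-translation is a Hamming isometry. For $a\in\{0,1\}^n$, let $\rho_a$ denote the map $x\mapsto a *_\pi x$. Because $*_\pi$ makes $\{0,1\}^n$ into the group $Z_2^\alpha Z_4^\beta$, the map $\rho_a$ is a bijection; and because $*_\pi$ is, on each $\pi$-orbit, either $Z_2$- or $Z_4$-addition while the Gray map carries the translation-invariant Lee metric to the Hamming metric, $\rho_a$ preserves the Hamming distance. I will also use that $*_\pi$ is commutative (since $Z_2$ and $Z_4$ are abelian) and that $P$, being a nonempty $*_\pi$-closed subset of the finite group $(\{0,1\}^n,*_\pi)$, is automatically a subgroup containing $0$.

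The first step is to show that $a *_\pi C = C$ for every $a\in P$. Since $P$ is a subgroup, $\rho_a(P)=P$ whenever $a\in P$. The image $\rho_a(C)$ is then an extended $1$-perfect code, as the isometric image of one, and it contains $P=\rho_a(P)$; by the uniqueness in Proposition \ref{p:P-C}, this forces $\rho_a(C)=C$.

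To promote the action from $P$ to all of $C$, pick any $c\in C$. The previous step together with commutativity gives $\rho_c(P)=c *_\pi P\subseteq C$, and $\rho_c(P)$ is itself a Preparata-like code (as the isometric image of $P$). Now $C$ is an extended $1$-perfect code containing the Preparata-like code $\rho_c(P)$, and Proposition \ref{p:P-C} says such an ambient code is unique; since $\rho_c(C)$ is another extended $1$-perfect code containing $\rho_c(P)$, we conclude $\rho_c(C)=C$, whence $c *_\pi c' \in C$ for all $c, c' \in C$. The main conceptual obstacle is simply spotting the right way to iterate uniqueness: the first invocation promotes $P$-translations to symmetries of $C$, and the shifted Preparata-like code $\rho_c(P)\subseteq C$ produced by that step is the natural object to which uniqueness can be re-applied to upgrade $P$-translation symmetry to full $C$-translation symmetry. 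Beyond the Hamming-isometry property of $*_\pi$-translations, no deeper structural theory of $P$ or $C$ is required.
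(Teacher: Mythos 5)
The paper does not prove this proposition at all---it is quoted with a citation to Borges, Phelps, Rif\`a and Zinoviev---so there is no in-paper argument to compare against. Your proof, however, is correct and self-contained given Proposition~\ref{p:P-C}, and it is a nice argument. The three ingredients all check out: (i) each translation $\rho_a(x)=a*_\pi x$ is a bijection of $\{0,1\}^n$ preserving Hamming distance, since on a self-adjacent coordinate it is a $Z_2$-translation and on a pair of adjacent coordinates it is, via the Gray map, a $Z_4$-translation preserving the Lee metric, and Hamming distance is the sum of these local contributions; hence isometric images of (extended $1$-perfect, Preparata-like) codes are again such codes; (ii) $P$, being a nonempty finite subset of the abelian group $(\{0,1\}^n,*_\pi)$ closed under the operation, is a subgroup, so $\rho_a(P)=P$ for $a\in P$, and the uniqueness in Proposition~\ref{p:P-C} forces $\rho_a(C)=C$; (iii) for $c\in C$, commutativity and step (ii) give $\rho_c(P)=c*_\pi P\subseteq C$, and applying the uniqueness of Proposition~\ref{p:P-C} to the Preparata-like code $\rho_c(P)$, which lies in both $C$ and $\rho_c(C)$, yields $\rho_c(C)=C$, i.e., closure of $C$ under $*_\pi$. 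The double use of the uniqueness statement is exactly the right lever, and nothing beyond the isometry property of $*_\pi$-translations is needed.
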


In \cite{BPRZ:2003:Preparata}, 
it was shown that 
any $Z_2Z_4$-linear Pre\-pa\-ra\-ta-like code
is necessarily $Z_4$-linear,
i.e., the involution $\pi$ 
has no fixed points.

\begin{remark} 
In the current work, 
we consider the distance-$6$ Pre\-pa\-ra\-ta-like codes, 
sometimes referred to as 
the extended Pre\-pa\-ra\-ta-like codes.
In one-to-one correspondence 
with such codes are 
the distance-$5$ Pre\-pa\-ra\-ta-like codes, 
sometimes called 
the punctured Pre\-pa\-ra\-ta-like codes
(in fact, the original Preparata codes \cite{Preparata:1968} were presented in terms of distance-$5$ codes).
Reformulating Proposition~\ref{p:P-C},
every punctured Pre\-pa\-ra\-ta-like code
is included in a unique $1$-perfect code.
Formally, we can include 
the punctured Pre\-pa\-ra\-ta-like codes
in the statement of Theorem~\ref{th:main} below, 
but this does not make any sense 
as there are no
$Z_2Z_4$-linear codes among them, see \cite{BPRZ:2003:Preparata}.
\end{remark}

\section{Results}\label{s:res}

Generally, 
a binary code 
can admit 
more than one 
$Z_2Z_4$ structure. 
For example, 
the $1$-perfect code 
$\{0000000$, $
0001011$, $
0010110$, $
0101100$, $
1011000$, $
0110001$, $
1100010$, $
1000101$, $
1110100$, $
1101001$, $
1010011$, $
0100111$, $
1001110$, $
0011101$, $
0111010$, $
1111111\}$ (this is the cyclic Hamming code of length $7$, see e.g. \cite{MWS}) is closed with respect to $*_\pi$ for $22$ different involutions $\pi$,
including $\mathrm{Id}$,
$(01)(24)$, $(02)(14)$, and $(04)(12)$ (and all their cyclic shifts).
From the characterisation of 
$Z_2Z_4$-linear $1$-perfect codes,
we see that a $1$-perfect code of length at least $15$ or an extended $1$-perfect code of length more than $16$ cannot admit two
$Z_2Z_4$ structures with different number of self-adjacent coordinates.
The next theorem, which is the main result of the paper, states more.
\begin{theorem}\label{th:main}
Let $C$ be a $1$-perfect, extended $1$-perfect, or Pre\-pa\-ra\-ta-like code of length $n>16$
closed with respect to both operations $*_\pi$ and $*_{\tau}$, where $\pi$ and $\tau$ are $Z_2Z_4$ structures.
Then $\pi = \tau$.
\end{theorem}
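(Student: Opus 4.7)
The plan is to reduce all three code classes to the extended $1$-perfect case, then use Proposition~\ref{p:char} to reduce the claim to a commutation statement about code symmetries, and finally attack that combinatorially.

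For the reductions: if $P$ is Pre\-pa\-ra\-ta-like and closed under $*_\pi$ and $*_\tau$, Proposition~\ref{p:ZP-ZC} transports both closures to the unique extended $1$-perfect code $C \supseteq P$. If $C$ is $1$-perfect and closed under $*_\pi,*_\tau$, then extending by a parity-check coordinate and extending $\pi,\tau$ to fix the new coordinate yields an extended $1$-perfect code $C^{\ast}$ of length $n+1>16$ closed under $*_{\pi^\ast}$ and $*_{\tau^\ast}$; the routine verification uses that $c + \pi(c)$ has even weight and so does the coordinatewise product $(c+\pi(c))(c'+\pi(c'))$. Since $\pi^{\ast} = \tau^{\ast}$ forces $\pi = \tau$, we may assume $C$ is extended $1$-perfect and $n > 16$. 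Proposition~\ref{p:char}(c) then forbids $C$ from being equivalent to $Z_2Z_4$-linear codes of distinct types (or of distinct ranks within the $Z_4$-linear family), so $\pi$ and $\tau$ share the same type $(\alpha,\beta)$; the uniqueness in parts~(a) and~(b), applied to the two structured codes $(C,\pi)$ and $(C,\tau)$, produces a permutation $\sigma$ with $\sigma(C) = C$ and $\sigma\pi\sigma^{-1} = \tau$. The theorem thereby reduces to showing $\mathrm{Sym}(C) = \mathrm{Sym}_\pi(C)$: every symmetry of $C$ commutes with~$\pi$.

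To establish that commutation, I would try to characterize the $\pi$-partition of coordinates intrinsically in $C$. A natural candidate invariant is the doubled subcode $D_\pi = \{c + \pi(c)\mid c\in C\}$, which is $Z_2$-linear, contained in the kernel $\ker(C)$, and whose codewords vanish on the $\pi$-fixed coordinates and are constant on each $\pi$-pair; in particular the complement of $\mathrm{supp}(D_\pi)$ recovers the set of $\pi$-fixed coordinates. The crucial step is to show, first, that $D_\pi$ is itself a code invariant (independent of $\pi$), and second, that the pairing on the remaining coordinates can be extracted from the weight-$4$ elements of $D_\pi$ together with the Steiner system $S(3,4,n)$ formed by the weight-$4$ codewords of $C$. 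This last extraction is the main obstacle, and it must genuinely rely on $n > 16$: the length-$16$ extended Hamming code is simultaneously linear and $Z_4$-linear (Proposition~\ref{p:char}(b) with $t=4$, $r=3$), hence admits essentially different $Z_2Z_4$ structures, so the argument must use some combinatorial rigidity available only for $n \geq 32$.
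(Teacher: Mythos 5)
Your reductions (Pre\-pa\-ra\-ta-like $\to$ extended $1$-perfect via Propositions~\ref{p:P-C} and~\ref{p:ZP-ZC}, and $1$-perfect $\to$ extended $1$-perfect by appending a parity-check coordinate fixed by both involutions) are correct and coincide with the paper's first two steps. The problem is everything after that. Your route through Proposition~\ref{p:char} converts the theorem into the statement $\mathrm{Sym}(C)=\mathrm{Sym}_\pi(C)$ --- which is precisely Corollary~\ref{cor:aut}, deduced in the paper \emph{from} the theorem --- and you then propose to prove that commutation by reconstructing the $\pi$-pairing intrinsically from the doubled subcode $D_\pi=\{c+\pi(c)\mid c\in C\}$ and the Steiner system of weight-$4$ codewords. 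But you carry out neither of the two steps you yourself identify as crucial (that $D_\pi$ is independent of $\pi$, and that the pairing on the non-fixed coordinates is recoverable), and you explicitly label the second one ``the main obstacle.'' As it stands the proposal is an equivalent reformulation plus an unproven plan, so there is a genuine gap at the heart of the argument. A secondary soft spot: you read the uniqueness ``up to coordinate permutation'' in Proposition~\ref{p:char} as uniqueness of the pair $(C,\pi)$ up to simultaneous conjugation; that is what the underlying classification actually provides, but it is not literally what the proposition states, and the deduction of a $\sigma\in\mathrm{Sym}(C)$ with $\sigma\pi\sigma^{-1}=\tau$ depends on it.

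The paper closes the gap with a short, direct distance argument that needs neither the classification nor any intrinsic invariant. Suppose $\pi(i)\ne\tau(i)$ with, say, $\pi(i)\ne i$; call coordinates $j,j'$ independent if $j'\notin\{j,\pi(j),\tau(j)\}$. One finds two weight-$4$ codewords $v,u$ whose supports consist of pairwise independent coordinates except for the single interaction $v_i=u_{\pi(i)}=1$. Then $v*_\tau u=v+u$ (only the first row and column of Table~\ref{eq:tables}(a) ever occur), while $v*_\pi u$ differs from $v+u$ exactly in coordinates $i$ and $\pi(i)$; two codewords of $C$ at Hamming distance $2$ contradict minimum distance $4$. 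The existence of such $v,u$ is a counting argument over the Steiner system $S(3,4,n)$ of weight-$4$ codewords: after fixing $i,i_2,i_3,i_4$ and $\pi(i),j_2,j_3$, the fourth coordinate $j_4$ of $u$ is forced, and at least $n-25$ choices of $j_3$ survive all independence constraints, which is positive exactly because $n>16$ forces $n\ge 32$. If you want to rescue your own approach you must actually prove the reconstructibility of the pairing; the paper shows that a much lighter, purely local obstruction already does the job.
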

The theorem will be proven in the next section. Here, we consider some important corollaries.
\begin{corollary}\label{cor:aut}
Let $C$ be a $1$-perfect, extended $1$-perfect, or Pre\-pa\-ra\-ta-like code of length $n>16$. 
If $C$ is $Z_2Z_4$-linear, 
i.e, closed with respect 
to the operation $*_\pi$,
for some $Z_2Z_4$ structure $\pi$,
then 
$\mathrm{Sym}(C)=\mathrm{Sym}_{\pi}(C)$.
\end{corollary}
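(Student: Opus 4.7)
The plan is to apply Theorem~\ref{th:main} to an appropriate conjugate of $\pi$. The inclusion $\mathrm{Sym}_\pi(C)\subseteq\mathrm{Sym}(C)$ holds by definition, so the task reduces to showing the reverse inclusion. Given any $\sigma\in\mathrm{Sym}(C)$, I would introduce the conjugated involution $\tau:=\sigma\pi\sigma^{-1}$. Since $\pi$ is an order-$2$ permutation of $\{0,1,\ldots,n-1\}$, so is $\tau$ (conjugation preserves the cycle structure), and therefore $\tau$ is again a $Z_2Z_4$ structure.

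The key step will be verifying that $C$ is closed under $*_\tau$, after which Theorem~\ref{th:main} forces $\tau=\pi$, i.e., $\sigma\pi=\pi\sigma$, which is precisely the condition $\sigma\in\mathrm{Sym}_\pi(C)$. For the closure verification, I would exploit the fact that every coordinate permutation $\sigma$ is both additive and multiplicative on $\{0,1\}^n$, that is, $\sigma(u+v)=\sigma(u)+\sigma(v)$ and $\sigma(u\cdot v)=\sigma(u)\cdot\sigma(v)$, together with the identity $\tau(\sigma(x))=\sigma(\pi(x))$ coming from the definition of $\tau$. Substituting into the definition of $*_\tau$ and pulling $\sigma$ out gives $\sigma(x)*_\tau\sigma(y)=\sigma(x*_\pi y)$. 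Since $\sigma(C)=C$ and $C$ is $*_\pi$-closed, this yields $*_\tau$-closure of $C$.

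There is no real obstacle here beyond this short algebraic identity; the content of the corollary lies entirely in Theorem~\ref{th:main}, which guarantees the uniqueness of the $Z_2Z_4$ structure admitted by $C$. The argument is uniform across the $1$-perfect, extended $1$-perfect, and Preparata-like cases because the hypothesis $n>16$ is inherited directly from the theorem.
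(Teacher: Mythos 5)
Your proposal is correct and follows essentially the same route as the paper: both conjugate $\pi$ by the given symmetry (the paper uses $\tau=\sigma^{-1}\pi\sigma$ and argues by contradiction, you use $\tau=\sigma\pi\sigma^{-1}$ directly), verify $*_\tau$-closure of $C$ via the identity $\sigma(x)*_\tau\sigma(y)=\sigma(x*_\pi y)$, and invoke Theorem~\ref{th:main} to conclude $\tau=\pi$, i.e., that $\sigma$ commutes with $\pi$. The difference is purely cosmetic.
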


\begin{proof}
 Seeking a contradiction, assume that $\sigma$ is from $\mathrm{Sym}(C)$ but not from $\mathrm{Sym}_{\pi}(C)$. 
 Then the involution $\tau \stackrel{\scriptscriptstyle\mathrm{def}}= \sigma^{-1}\pi\sigma$ does not coincide with $\pi$.
 However, for any two codewords $x$ and $y$,
 \begin{eqnarray*}
  x*_\tau y &=& x+y+(x+\tau(x))\cdot (y+\tau(y)) \\
  &=&\sigma^{-1} \left(x'+y'+\bigl(x' + \pi(x')\bigr)\cdot \bigl(y'+\pi(y')\bigr)\vphantom{A^1}\right) \\
 \end{eqnarray*}
 belongs to $C$ (here $x'=\sigma(x)\in C$ and $y'=\sigma(y)\in C$).
 We have a contradiction with Theorem~\ref{th:main}.
\end{proof}

An exhaustive computer search shows that the statement of Corollary~\ref{cor:aut} holds
also for the non-$Z_4$-linear extended $1$-perfect codes of lengths $16$ (as follows, it is also true
for the $Z_2Z_4$-linear $1$-perfect codes of length $15$, see the observation in the second paragraph of the next section).
For the $Z_4$-linear extended $1$-perfect codes of length $16$, the situation is different.
One of the two non-equivalent codes admits also the linear structure, and it is not difficult to find
that the $Z_4$-linear structure is not preserved by all symmetries.
The other code meets $|\mathrm{Sym}(C)| = 3|\mathrm{Sym}_{\pi}(C)|$.
The order of the symmetry group of the 
unique Pre\-pa\-ra\-ta-like code of length $16$ 
is $16\cdot 15\cdot 14\cdot12$ \cite{Berlekamp:71}, see also \cite{Bier:2007:NR}.


\begin{corollary}\label{c:abc}
{\rm (a)} 
If $C'$ is a $Z_2Z_4$-linear $1$-perfect code 
of type $(2^r-1, 2^{t-1}-2^{r-1})$, 
$t\ge 4$, 
$\frac t2\le r \le t$,
then
$\mathrm{Sym}(C')$ is isomorphic to the automorphism group of the group
$Z_2^{\dot\gamma}\times Z_4^{\delta}$, $\dot\gamma \stackrel{\scriptscriptstyle\mathrm{def}}= 2r-t$, $\delta \stackrel{\scriptscriptstyle\mathrm{def}}= t-r$,
and has the cardinality
$$2^{\frac12\dot\gamma^2-\frac12\dot\gamma+2\dot\gamma\delta+\frac32\delta^2-\frac12\delta}
\prod_{i=1}^{\dot\gamma} (2^i-1)\prod_{i=1}^\delta (2^i-1).$$

{\rm (b)} 
If $C$ is a $Z_2Z_4$-linear extended $1$-perfect code 
of type $(2^r, 2^{t-1}-2^{r-1})$, 
$t\ge 4$, $\frac t2\le r \le t$, then
$\mathrm{Sym}(C)$  is isomorphic to a semidirect product of
the automorphism group of the group $Z_2^{\dot\gamma}\times Z_4^{\delta}$, $\dot\gamma \stackrel{\scriptscriptstyle\mathrm{def}}= 2r-t$, $\delta \stackrel{\scriptscriptstyle\mathrm{def}}= t-r$,
 with the group $Z_2^{\dot\gamma+\delta}$ 
 of translations by an element of order less that $4$ 
 and has the cardinality
$$2^{\frac12\dot\gamma^2+\frac12\dot\gamma+2\dot\gamma\delta+\frac32\delta^2+\frac12\delta}
\prod_{i=1}^{\dot\gamma} (2^i-1)\prod_{i=1}^\delta (2^i-1).$$

{\rm (c)} 
If $C$ is a $Z_4$-linear extended $1$-perfect code 
of rank $2^t - r - 1$, $t>4$, 
$\frac{t-1}2\le r \le t-1$, then
$\mathrm{Sym}(C)$ 
has the cardinality
$$2^{\frac12\gamma^2+\frac32\gamma+2\gamma\dot\delta+\frac32\dot\delta^2+\frac52\dot\delta+1}
\prod_{i=1}^\gamma (2^i-1)\prod_{i=1}^{\dot\delta} (2^i-1),$$
where $\gamma \stackrel{\scriptscriptstyle\mathrm{def}}= 2r-t+1$, 
$\dot\delta \stackrel{\scriptscriptstyle\mathrm{def}}= t-r-1$.
\end{corollary}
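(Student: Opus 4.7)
The plan is to derive all three parts from Corollary~\ref{cor:aut}, the identification~(\ref{eq:sym-aum}), and the duality identity of Proposition~\ref{p:ap-pa}, reducing the computation of $\mathrm{Sym}(C)$ to the much smaller monomial automorphism group of the dual $Z_2Z_4$-additive code $\mathcal C^\perp$. Concretely, I would first apply Corollary~\ref{cor:aut} (together with the boundary cases $n=15,16$ noted in the paragraph following it) to obtain $\mathrm{Sym}(C)=\mathrm{Sym}_\pi(C)$ for the admissible $Z_2Z_4$ structure $\pi$ on $C$, then invoke~(\ref{eq:sym-aum}) to identify this group with $\mathrm{MAut}(\mathcal C)$ for the $Z_2Z_4$-additive preimage $\mathcal C$, and finally pass to $\mathrm{MAut}(\mathcal C^\perp)$ via Proposition~\ref{p:ap-pa}. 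The dual has only $2^t$ elements in case~(a) and $2^{t+1}$ in cases~(b) and~(c), and its abstract abelian-group structure is directly readable from the parameters supplied by the classification in Proposition~\ref{p:char}.

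For part~(a), I would exhibit $\mathcal C^\perp$ as a $Z_2Z_4$-additive Hamming-like code whose underlying abelian group is $Z_2^{\dot\gamma}\times Z_4^{\delta}$, with the columns of a generator matrix parameterizing the nonzero elements of this group modulo the involution $v\mapsto -v$: there are exactly $2^{\dot\gamma+\delta}-1=2^r-1$ binary columns (the nonzero $2$-torsion elements) and $2^{t-1}-2^{r-1}$ quaternary columns (the $\{v,-v\}$-orbits of order-$4$ elements), matching $\alpha$ and $\beta$ respectively. Every abstract automorphism of $Z_2^{\dot\gamma}\times Z_4^{\delta}$ permutes these orbits and hence induces a monomial transformation of $\mathcal C^\perp$; conversely, any monomial transformation preserving $\mathcal C^\perp$ is determined by its action on a generating set of columns and extends uniquely to an abstract group automorphism. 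Hence $\mathrm{MAut}(\mathcal C^\perp)$ is precisely $\mathrm{Aut}(Z_2^{\dot\gamma}\times Z_4^{\delta})$, whose cardinality is given by the standard product formula for the automorphism group of a finite abelian $2$-group, matching the stated expression.

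For part~(b), the parity extension adds one $Z_2$ generator to $\mathcal C^\perp$, turning the column multiset into a coset of $Z_2^{\dot\gamma}\times Z_4^{\delta}$ inside $Z_2^{\dot\gamma+1}\times Z_4^{\delta}$ rather than a punctured subgroup; accordingly, the monomial group splits as a semidirect product of $\mathrm{Aut}(Z_2^{\dot\gamma}\times Z_4^{\delta})$ (symmetries fixing the parity coordinate) with $Z_2^{\dot\gamma+\delta}$ (translations in the parity direction by $2$-torsion elements of $Z_2^{\dot\gamma}\times Z_4^{\delta}$), producing the extra factor $2^{\dot\gamma+\delta}$. Part~(c) is analogous once the $Z_4$-linearity constraint $\alpha=0$ is taken into account: the all-twos vector is forced into $\mathcal C^\perp$, yielding an additional $Z_2$ summand beyond the usual parity and a corresponding enlargement of the translation group; a careful semidirect-product bookkeeping then produces the announced exponent. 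The main technical obstacle throughout is the explicit identification of $\mathrm{MAut}(\mathcal C^\perp)$ with the concrete (possibly twisted) automorphism group described above: one must verify that the column multiset of a generator matrix of $\mathcal C^\perp$ determines its $Z_2Z_4$-additive structure up to translations in the parity direction, and that every monomial transformation preserving $\mathcal C^\perp$ arises in this way. Once this identification is in place, all three cardinality formulas follow from routine bookkeeping with the standard product expression for $|\mathrm{Aut}(Z_2^{\dot\gamma}\times Z_4^{\delta})|$.
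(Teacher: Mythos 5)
Your reduction chain is exactly the paper's: Corollary~\ref{cor:aut} (with the computational cover of the $t=4$ boundary cases) gives $\mathrm{Sym}(C)=\mathrm{Sym}_\pi(C)$, identity~(\ref{eq:sym-aum}) turns this into $\mathrm{MAut}(\mathcal C)$, and Proposition~\ref{p:ap-pa} passes to $\mathrm{MAut}(\mathcal C^\perp)$. Where you diverge is in what happens next. The paper at this point simply observes that $\mathcal C^\perp$ is a $Z_2Z_4$-additive Hadamard code and cites the already-known description of $\mathrm{MAut}$ of such codes (\cite[Theorems~2 and~3]{KroVil:2015}) for parts (b) and (c), and then obtains part (a) \emph{from} part (b) by the orbit--stabilizer theorem applied to the transitive $Z_2^r$-action on the self-adjacent coordinates. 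You go in the opposite direction: you compute $\mathrm{MAut}(\mathcal C^\perp)$ from scratch via the simplex-like column description (columns $=$ nonzero elements of $Z_2^{\dot\gamma}\times Z_4^\delta$ modulo negation) for (a), and then build (b) and (c) on top by adjoining the parity generator. Your column counts check out ($2^r-1$ two-torsion orbits, $2^{t-1}-2^{r-1}$ order-four orbits, dual cardinalities $2^t$ and $2^{t+1}$), and the direction (a)$\Rightarrow$(b) is just as legitimate as the paper's (b)$\Rightarrow$(a).

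The issue is that the step you yourself flag as ``the main technical obstacle'' --- that every monomial transformation preserving $\mathcal C^\perp$ is induced by an abstract automorphism of $Z_2^{\dot\gamma}\times Z_4^\delta$ (composed, in the extended cases, with a translation) --- is precisely the content of the cited theorems of \cite{KroVil:2015}, so as written the substantive work is deferred rather than done. For (a) and (b) your sketch is plausible and would likely go through with a standard ``the code is the image of the full evaluation map, so coordinate permutations preserving it come from automorphisms of the index group'' argument. Part (c) is where the sketch genuinely underdetermines the answer: in the all-quaternary case the formula is \emph{not} obtained from (b) by the same semidirect-product bookkeeping (note the extra $+1$ in the exponent and the shift from $(\dot\gamma,\delta)$ to $(\gamma,\dot\delta)=(2r-t+1,\,t-r-1)$, so the ``base'' group is $Z_2^\gamma\times Z_4^{\dot\delta}$ sitting inside $Z_2^{\gamma}\times Z_4^{\dot\delta+1}$ rather than $Z_2^{\dot\gamma+1}\times Z_4^\delta$, and the translation subgroup has a different structure). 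Your phrase ``a careful semidirect-product bookkeeping then produces the announced exponent'' does not engage with this asymmetry, and without it you cannot recover the stated cardinality; this is the one place where I would say the proposal has a concrete gap rather than merely an omitted citation.
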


\begin{proof}
(b,c)
Without loss of generality, 
we can assume that
the $Z_2Z_4$ structure corresponding 
to the considered $Z_2Z_4$-linear
code $C$ has the form (\ref{eq:standard-pi}),
where $\alpha = 2^r$ in the case (b) and $\alpha = 0$ in the case (c).

By Corollary~\ref{cor:aut}, 
$\mathrm{Sym}(C)
=\mathrm{Sym}_{\pi}(C)$ 
(the case $t=4$ 
is covered 
by the computational results 
mentioned above).
Without loss of generality assume that $\pi$ is of form 
(\ref{eq:standard-pi}).
Denote 
$\mathcal C \stackrel{\scriptscriptstyle\mathrm{def}}= \Phi^{-1}(C)$.
According to (\ref{eq:sym-aum}),
we have
$\mathrm{Sym}_{\pi}(C)\simeq
\mathrm{MAut}(\mathcal C)$.
By Proposition~\ref{p:ap-pa},
$\mathrm{MAut}(\mathcal C)=
\mathrm{MAut}(\mathcal C^\perp)$.
For a $Z_2Z_4$-linear extended $1$-perfect code $C$,
the related  code $\mathcal C^\perp$ is a so-called $Z_2Z_4$-additive Hadamard code, see \cite{PheRifVil:2006}.
The structure of the monomial automorphism group of such codes was studied in \cite{KroVil:2015},
and statements (b) and (c)
of the current corollary follow 
from \cite[Theorem~3]{KroVil:2015} 
and \cite[Theorem~2]{KroVil:2015}, respectively. 

(a) Since by appending a parity-check bit, every code $C'$ from p.(a) results in a code $C$ from p.(b),
where the new coordinate is self-adjacent,
$\mathrm{Sym}_{\pi}(C')$ coincides with the stabilizer of a self-adjacent coordinate in $\mathrm{Sym}_{\pi}(C)$.
As follows directly 
from the structure
of $\mathrm{MAut}(\mathcal C^\perp)$ considered in \cite[Section~IV]{KroVil:2015} and from the mentioned above connection between $\mathrm{MAut}(\mathcal C^\perp)$ and $\mathrm{Sym}_{\pi}(C)$,
the last group contains a subgroup isomorphic to $Z_2^r$ that acts transitively on the $2^r$ self-adjacent coordinates
(in the statement (b) of the current corollary, this subgroup is mentioned as the group of translations).
By the orbit--stabilizer theorem, we have 
$|\mathrm{Sym}_{\pi}(C')|=|\mathrm{Sym}_{\pi}(C)|/2^r$.
\end{proof}

Another interesting corollary from Theorem~\ref{th:main} was suggested by one of the reviewers. Recall that two
$Z_2Z_4$-linear or $Z_2Z_4$-additive codes are \emph{equivalent}
if one of the codes can be obtained from the other by a 
monomial transformation, that is, by a coordinate permutation and, if necessary, sign changes in some coordinates.
\begin{corollary}\label{c:rev}
Let $\mathcal C$ and $\mathcal D$ be 
$Z_2Z_4$-additive codes such that
$C=\Phi(\mathcal C)$ and
$D=\Phi(\mathcal D)$ are 
$1$-perfect, extended $1$-perfect,
or Pre\-pa\-ra\-ta-like codes.
If $\mathcal C$ and $\mathcal D$
are nonequivalent,
then $C$ and
$D$ 
are nonequivalent too.
\end{corollary}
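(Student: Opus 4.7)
The plan is to prove the contrapositive: assuming that $C$ and $D$ are equivalent as binary codes, I will exhibit a monomial transformation between $\mathcal{C}$ and $\mathcal{D}$. Let $\sigma\in S_n$ be a coordinate permutation with $\sigma(C)=D$, and let $\pi_C,\pi_D$ be the standard $Z_2Z_4$ structures of the form (\ref{eq:standard-pi}) associated with $\mathcal{C}$ and $\mathcal{D}$, so by Proposition~\ref{p:gray} the code $C$ is closed under $*_{\pi_C}$ and $D$ under $*_{\pi_D}$. Transporting via $\sigma$, the code $C=\sigma^{-1}(D)$ is also closed under $*_{\sigma^{-1}\pi_D\sigma}$. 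Theorem~\ref{th:main}, applied to $C$, forces $\pi_C=\sigma^{-1}\pi_D\sigma$, i.e.\ $\sigma\pi_C=\pi_D\sigma$; in particular $\pi_C$ and $\pi_D$ have the same cycle type, so $\mathcal{C}$ and $\mathcal{D}$ share a common type $(\alpha,\beta)$.

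The conjugation relation $\sigma\pi_C=\pi_D\sigma$ says that $\sigma$ carries the $\alpha$ self-adjacent binary coordinates of $C$ bijectively onto those of $D$ and sends each adjacent pair $\{i,\pi_C(i)\}$ of $C$ onto an adjacent pair of $D$. This data defines a monomial transformation $\tau$ on $\{0,1\}^\alpha\times\{0,1,2,3\}^\beta$: a permutation of the $\alpha$ binary positions read off from the action on self-adjacent coordinates, a permutation of the $\beta$ quaternary positions read off from the action on adjacent pairs, and a sign change in each quaternary position where $\sigma$ reverses the order of the corresponding binary pair. The only nontrivial point is that swapping the two binary coordinates of an adjacent pair matches, under the Gray map $\phi$, the map $q\mapsto -q$ on $Z_4$; this follows by direct inspection of $\phi(0)=(0,0)$, $\phi(1)=(0,1)$, $\phi(2)=(1,1)$, $\phi(3)=(1,0)$.

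By construction $\sigma\circ\Phi=\Phi\circ\tau$, hence $\Phi(\tau(\mathcal{C}))=\sigma(\Phi(\mathcal{C}))=\sigma(C)=D=\Phi(\mathcal{D})$, and injectivity of $\Phi$ yields $\tau(\mathcal{C})=\mathcal{D}$, so $\mathcal{C}$ and $\mathcal{D}$ are equivalent as additive codes. The real content of the argument sits in Theorem~\ref{th:main}: once we know that the $Z_2Z_4$ structure of $C$ is unique, it is forced to be transported by any binary equivalence to the structure of $D$, and everything else is a routine dictionary between coordinate permutations of $C$ and monomial transformations of $\mathcal{C}$. The only mild obstacle I anticipate is bookkeeping — making sure the sign changes in $\tau$ are assigned consistently with (\ref{eq:standard-pi}) across all $\beta$ adjacent pairs — but no ideas beyond Theorem~\ref{th:main} and the Gray-map computation above are needed.
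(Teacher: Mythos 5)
Your proposal is correct and follows essentially the same route as the paper's proof: apply Theorem~\ref{th:main} to conclude that a binary equivalence $\sigma$ must conjugate one $Z_2Z_4$ structure to the other, then observe that a permutation preserving self-adjacent coordinates and adjacent pairs descends through the Gray map to a monomial transformation. Your write-up is somewhat more explicit than the paper's (separating $\pi_C$ from $\pi_D$ and spelling out that swapping the two bits of a Gray pair corresponds to negation in $Z_4$), but the underlying argument is identical.
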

\begin{proof}
Both $C$ and $D$
are closed with respect to $*_\pi$ where $\pi$ is
in the form (\ref{eq:standard-pi}).
 Assume that 
$C$ and
$D$ are equivalent; i.e., 
$C = \sigma(D)$
for some coordinate permutation 
$\sigma$.
Then, $C$ is also closed with respect to $*_{\sigma \pi \sigma^{-1}}$.
By Theorem~\ref{th:main}, we have $\pi = \sigma \pi \sigma^{-1}$.
This means that $\sigma$ preserves the pairs of adjacent coordinates.
It follows that $\Phi^{-1}\sigma\Phi$ is a monomial transformation 
and the codes $\mathcal D$ and 
$\mathcal C = \Phi^{-1}(C)=\Phi^{-1}(\sigma(D))=\Phi^{-1}(\sigma(\Phi(\mathcal D)))$ are equivalent.
\end{proof}
For the $Z_4$-linear Pre\-pa\-ra\-ta-like codes, this fact is new, 
as their class has not been completely characterized, 
in contrast to the case of (extended) $1$-perfect codes. 
It should be remarked, however, 
that the proof of Theorem~10.3(ii,iv) in \cite{CCKS:97} 
stating the same as Corollary~\ref{c:rev} 
for a partial class of $Z_4$-linear Pre\-pa\-ra\-ta-like codes 
works for all $Z_4$-linear Pre\-pa\-ra\-ta-like codes as well, 
taking into account the later result \cite{BPRZ:2003:Preparata} 
that only the $Z_4$-linear extended perfect code of length $2^t$ 
that has rank $2^t-t$ can include a $Z_4$-linear Preparata-like code.
\section{Proof of Theorem~\ref{th:main}}\label{s:proof}
\begin{proof}
We first note that by Propositions~\ref{p:P-C} and~\ref{p:ZP-ZC}, the statement on the Pre\-pa\-ra\-ta-like codes 
is straightforward from the one on the extended $1$-perfect codes.
Indeed, the only extended $1$-perfect code including a given $Z_2Z_4$-linear Pre\-pa\-ra\-ta-like code $P$
must be $Z_2Z_4$-linear with the same $Z_2Z_4$ structure as $P$.

At second, appending a parity-check bit to every codeword of a  $Z_2Z_4$-linear $1$-perfect code $C'$ of type $(\alpha,\beta)$
results in an extended $1$-perfect code $C$ of type $(\alpha+1,\beta)$.
Moreover any symmetry of $C'$ is naturally extended to a symmetry of $C$,
which fixes the last (appended) coordinate. 
So, to prove the theorem, it is sufficient to consider the case of an extended $1$-perfect code $C$.

Let $C$ be a $Z_2 Z_4$-linear code with two $Z_2Z_4$ structures, $\pi$ and $\tau$.
Seeking a contradiction, assume that for some coordinate $i$, we have $\pi(i)\ne \tau(i)$. 
Without loss of generality, $\pi(i)\ne i$.
We will say that two coordinates $j$ and $j'$ are \emph{independent} if $j'\not \in \{j,\pi(j),\tau(j)\}$
(equivalently, $j\not \in \{j',\pi(j'),\tau(j')\}$).

Suppose that $C$ has two codewords $v=(v_0,\ldots ,v_{n-1})$ and $u=(u_0,\linebreak[1]\ldots ,\linebreak[2]u_{n-1})$ such that every nonzero coordinate 
of $v$ is independent from every nonzero coordinate 
of $u$, with the only exception $v_i = u_{\pi(i)} = 1$.
Then $v*_\tau u$ coincides with $v + u$
(indeed, the situation in the middle bolded part of Table~\ref{eq:tables}(a) never occurs in this sum;
in fact, only the first row and the first column occur), while 
$v*_\pi u$ differs from $v+u$ in the coordinates $i$ and $\pi(i)$. 
Since both $v*_\tau u$ and $v*_\pi u$ must belong to $C$,
we have a contradiction with the code distance $4$.

It remains to find such two codewords $v$, $u$. 
We restrict the search by the weight-$4$ codewords.
Let  $i$, $i_2$, $i_3$, $i_4$ and $\pi(i)$, $j_2$, $j_3$, $j_4$ be the ones of $v$ and the ones of $u$, respectively.
It is easy to choose $i_2$, $i_3$, $i_4$ independent from $\pi(i)$ and to choose $j_2$ independent from $i$, $i_2$, $i_3$, and $i_4$.
For every choice of $j_3$, the fourth one $j_4$ of $u$ is defined uniquely. There are at least $n-3\cdot 4-1$ ways to choose 
$j_3$ independent from $i$, $i_2$, $i_3$, and $i_4$. In at least $n-3\cdot 4-1-3\cdot 4$ of them, the resulting $j_4$ is also 
independent from $i$, $i_2$, $i_3$, and $i_4$. Since $n-3\cdot 4-1-3\cdot 4\ge 32 - 25 >0$, the result follows. 
\end{proof} 

\section{Acknowledgement}
The work was supported by the Russian Foundation for Basic Research (grants 10-01-00424 and 13-01-00463).
The author would like to thank the referees for their valuable comments and suggestions.


\providecommand\href[2]{#2} \providecommand\url[1]{\href{#1}{#1}}
 \def\DOI#1{{\small {DOI}:
  \href{http://dx.doi.org/#1}{#1}}}\def\DOIURL#1#2{{\small{DOI}:
  \href{http://dx.doi.org/#2}{#1}}}

\end{document}